\newtheorem{theorem}{Theorem}
\newenvironment{definition}[1][Definition]{\begin{trivlist}
		\item[\hskip \labelsep {\bfseries #1}]}{\end{trivlist}}
\def\BibTeX{{\rm B\kern-.05em{\sc i\kern-.025em b}\kern-.08em
    T\kern-.1667em\lower.7ex\hbox{E}\kern-.125emX}}
\begin{document}

\title{\huge On Reliability of Underwater Magnetic Induction Communications with Tri-Axis Coils
}

\author{\IEEEauthorblockN{ Hongzhi Guo$^1$, Zhi Sun$^2$, and Pu Wang$^3$}
\IEEEauthorblockA{$^1$Engineering Department,
	Norfolk State University,
	Email: hguo@nsu.edu\\ 
$^2$Electrical Engineering Department,
University at Buffalo, State University of New York,
Email: zhisun@buffalo.edu\\
$^3$Department of Computer Science,
University of North Carolina at Charlotte,
Email: pu.wang@uncc.edu
	}
}

\maketitle

\begin{abstract}
Underwater magnetic induction communications (UWMICs) provide a low-power and high-throughput solution for autonomous underwater vehicles (AUVs), which are envisioned to explore and monitor the underwater environment. UWMIC with tri-axis coils increases the reliability of the wireless channel by exploring the coil orientation diversity. However, the UWMIC channel is different from typical fading channels and the mutual inductance information (MII) is not always available. It is not clear the performance of the tri-axis coil MIMO without MII. Also, its performances with multiple users have not been investigated. In this paper, we analyze the reliability and multiplexing gain of UWMICs with tri-axis coils by using coil selection. We optimally select the transmit and receive coils to reduce the computation complexity and power consumption and explore the diversity for multiple users. We find that without using all the coils and MII, we can still achieve reliability. Also, the multiplexing gain of UWMIC without MII is 5dB smaller than typical terrestrial fading channels. The results of this paper provide a more power-efficient way to use UWMICs with tri-axis coils.  
\end{abstract}

\begin{IEEEkeywords}
Coil selection, magnetic induction, tri-axis coil, underwater, wireless communications.
\end{IEEEkeywords}

\section{Introduction}

Wireless communication is a key technology for underwater exploration and monitoring using autonomous underwater vehicles (AUV). Acoustic and optical signals are popular solutions thanks to their long communication range and high data rate. Radio frequency (RF) signals are rarely used due to the high absorption of lossy water medium. Recent research shows that using magnetic induction with frequency from 10 KHz to 10 MHz is an efficient solution. Although this technology has been used for more than a hundred years for submarines, only recently it is applied to small robots in underwater wireless communications.     

The underwater magnetic induction communication (UWMIC) channel has been modeled in \cite{domingo2012magnetic,gulbahar2012communication,guo2017multiple}. In \cite{domingo2012magnetic}, the direct magnetic induction channel and waveguide channel with large coils are analytically derived using circuit analysis. In \cite{gulbahar2012communication}, a UWMIC network is designed with a large numbers of devices that are placed in cubes. The communication performances, such as bit error rate and reliability, are presented. Both \cite{domingo2012magnetic} and \cite{gulbahar2012communication} consider deep water by neglecting the water-air interface and employ advanced technologies to reduce the effects of high conductive water medium, such as super-conductive materials. In \cite{guo2017multiple}, the UWMIC is employed in shallow water and the water-air interface is treated. By leveraging the tri-axis coil, a multiple-input-multiple-output (MIMO) wireless link is analyzed; the results show that the tri-axis coil with ideal mutual inductance information (MII) can overcome the misalignment effects caused by the random orientations and movement of AUVs.   

Magnetic induction MIMO has been used for wireless communications and power transfer in underground and terrestrial environments. In \cite{markham2012magneto}, the tri-axis coil-based MIMO with polarization modulation is used to improve the underground communication speed to rescue trapped miners. In \cite{kisseleff2017magnetic}, the underground simultaneous wireless communications and power transfer using tri-axis coil-based MIMO is modeled by considering one information receiver and multiple energy receivers. Optimal beamforming strategies are developed to efficiently charge the energy receivers and maintain the required signal-to-noise ratio (SNR) for the information receiver. Although \cite{markham2012magneto} and \cite{kisseleff2017magnetic} use tri-axis coils, only a single information receiver is considered. Moreover, the underground wireless channel is different from the underwater environment; the underwater wireless devices can be mobile. Another research line of magnetic induction MIMO is wireless power transfer. In \cite{jadidian2014magnetic,moghadam2017node,yang2017magnetic}, the optimal beamforming strategies using magnetic induction MIMO are derived to efficiently charge wireless devices. The considered coil array is planar, which is more suitable for indoor applications rather than tiny robots due to the limited size. 

Although the tri-axis coil-based MIMO can improve reliability, it is not efficient for small AUVs due to the complexity and power consumption. In this paper, we want to answer the following two questions: 1) to guarantee the reliability, do we really need the 3$\times$3 MIMO or can we reduce the coil number by using coil selection? 2) if we use the  3$\times$3 MIMO, what is the multiplexing gain and how many users can be accommodated simultaneously? To the best of our knowledge, the above questions have not been addressed yet. In \cite{guo2017multiple,markham2012magneto,kisseleff2017magnetic} the tri-axis coils are employed, but there is no effort to reduce the coil number while maintaining the reliability. The analyses in \cite{domingo2012magnetic,gulbahar2012communication,guo2017multiple,markham2012magneto,kisseleff2017magnetic} are point-to-point communications, i.e., there is only one transmitter and one receiver. 

Different from terrestrial MIMO channels with independent fading, the channels of  tri-axis coil-based MIMO are not completely independent; they form a complete orthogonal basis in orientation to improve reliability. We analyze the UWMIC channel and emphasize these unique characteristics. Also, we discuss the coil selection strategy and the associated reliability and multiplexing gain. We find that even without using three coils simultaneously, we can achieve reliability in the high SNR regime. In this way, we can use fewer coils and the rest coils can be utilized for spatial multiplexing or energy harvesting. Finally, we explore the multiple user scenario, which is often used when there is a small swarm of AUVs with one swarm head which sends commands to other AUVs.  

The rest of this paper is organized as follows. In Section~II, the UWMIC channel is analyzed and the tri-axis coil-based communication system is introduced. In Section III, we present the reliability and multiplexing gain of different coil selection strategies and the differences of UWMIC channel and the wireless fading channel are compared. Also, the multi-user communication strategy is proposed and analyzed. In addition, we provide a channel estimation approach to update the MII in a timely manner. In Section IV, the simulation results are presented. Finally, this paper is concluded in Section V.  
\section{UWMIC Channel Characteristics}
\subsection{UWMIC Channel}
\begin{figure}[t]
	\centering
	\includegraphics[width=0.25\textwidth]{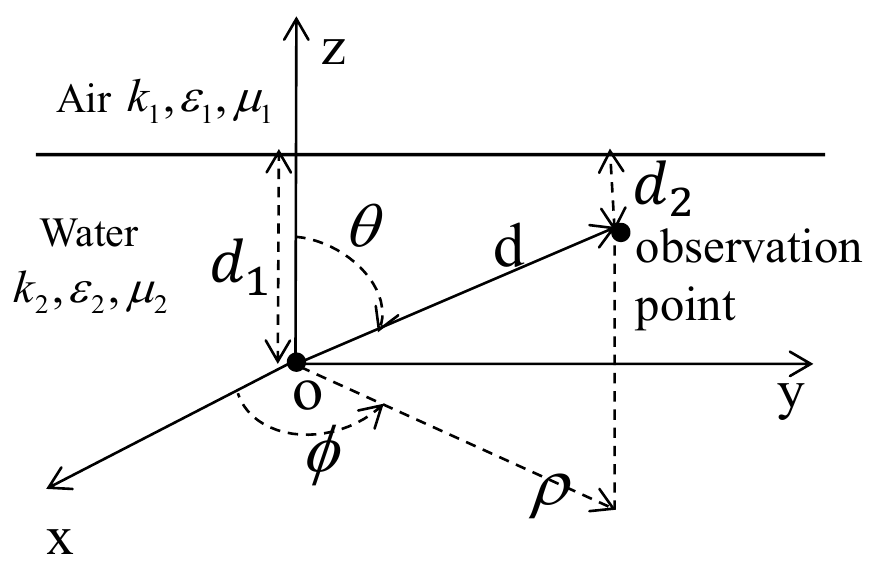}
	\vspace{-5pt}
	\caption{Cartesian and cylindrical coordinates system. The origin is the location of a transmit tri-axis coil. }
	\vspace{-10pt}
	\label{fig:sys}
\end{figure}
The UWMIC channel models are derived in \cite{domingo2012magnetic,gulbahar2012communication,guo2017multiple,guo2015channel}. The deep underwater channel models without considering the water-air interface are given in \cite{domingo2012magnetic,gulbahar2012communication}, and the shallow underwater channel models considering the lateral wave and reflections from the water-air interface are given in \cite{guo2017multiple,guo2015channel}. In the follows, we present the UWMIC model with low computation complexity than the model in \cite{guo2017multiple} and show the importance of the coil orientation.

In the following, we use both the Cartesian coordinates and the cylindrical coordinates to model the coil orientation and the magnetic field. The coordinates system is shown in Fig.~\ref{fig:sys}. For an arbitrarily orientated coil with orientation ${\bf  u}=[u_x,u_y,u_z]$, we can decompose its dipole moment into x-, y-, and z-orientated coils. 
In \cite{guo2017multiple}, an exact model is developed which consists of three layers of medium and the equations are complicated. Here, by using the method in \cite{chew1995waves}, we developed a simpler model with only the air and water media to characterize the magnetic field propagation in underwater. Since our key contribution of this paper is the coil selection and optimal transmission strategies, we defer the model to the Appendix and it is used in our numerical simulation.  Next, we use a simplified model to show how the coil orientation affects the channel quality. When the transceivers' depth are much smaller than their distance, we can obtain the magnetic fields generated by the $z-$orientated coil \cite{guo2017multiple},    	
\begin{align}
\label{equ:vertical_simp}
{\bf h}^z= [{h_{\rho}^{z}},{h_{\phi}^{z}},{h_{z}^{z}}]\approx\left[-1,0,\frac{-k_1}{k_2}\right] \frac{ji_z n_ca^2 k_1^2}{2 k_2 d^2}e^{jk_2(d_1+d_2)+jk_1d}
\end{align}
and the magnetic fields generated by the $x-$ and $y-$ orientated coils
\begin{align}
\label{equ:horizontal_simp}
&{\bf h}^{x/y}=[{h_{\rho}^{x/y}},{h_{\phi}^{x/y}},{h_{z}^{x/y}}]\approx\left[\frac {  -k_2}{k_1} \cos{\phi_{x/y}},\frac{-jk_2}{k_1^2d}\sin{\phi_{x/y}},\right.\nonumber\\
&\left.{-}\cos{\phi_{x/y}}\right]\cdot\frac{ ji_{x/y}n_c a^2 k_1^2}{2 k_2d^2}e^{jk_2d_1+jk_2d_2+jk_1d};
\end{align}
where $i$, $n_c$, and $a$ are the coil current, number of turns, and radius, respectively; $k_1=\omega\sqrt{\mu_1\epsilon_1}$ and $k_2=\omega\sqrt{\mu_2(\epsilon_2+j\sigma_2/\omega)}$ are the propagation constant of air and water, respectively; $\mu$, $\epsilon$, and $\sigma$ are permeability, permittivity, and conductivity, respectively; $d_1$, $d_2$ and $d$ are the transmitter depth, receiver depth and their distance, respectively; and $\phi_x$ and $\phi_y$ are the azimuth angles of $x-$ and $y-$ orientated coils, respectively. 

In view of \eqref{equ:vertical_simp} and \eqref{equ:horizontal_simp}, we can learn that the magnitude of the magnetic fields generated by the horizontal coil are $k_2/k_1$ times larger than those generated by the vertical coil by neglecting the azimuth angle. Since $k_2$ is around 10~dB larger than $k_1$ (the relative permittivity of water is around 81 for the considered frequency and temperature), the horizontal coils are more efficient in generating magnetic fields. 

Based on the magnetic field intensity, we can derive the mutual inductance between two arbitrarily orientated coils. Let the transmit coil orientation and the receive coil orientation be ${\bf u}_p$ and ${\bf u}_q$, which are unit vectors. Then, the mutual inductance is 
\begin{align}
\label{equ:mutual}
m_{p,q} = \mu_2 \pi a^2 n_c {\bf u}_q^t{\bf L}{\bf H}{\bf u}_p
\end{align}
where ${\bf H}=[{{\bf h}^x}^t,{{\bf h}^y}^t,{{\bf h}^z}^t]$ with unit transmit coil current and
\begin{align}
{\bf L} = \begin{bmatrix}
\cos \phi & -\sin \phi & 0           \\
\sin \phi & \cos\phi & 0    \\
0 & 0 & 1
\end{bmatrix},
\end{align}
which converts parameters in cylindrical coordinates to Cartesian coordinates. 


\subsection{Reliability and Multiplexing Gain}
The motivation of employing tri-axis coil is to increase the reliability. The single-input-single-output (SISO) with unidirectional coils (shown in the upper left in Fig.~\ref{fig:configuration}) may have very weak connection due to the orientation. 
The capacity of the SISO system can be written as
\begin{align}
\label{equ:siso_capacity}
C(m_{p,q})=\log_2\left[1+\frac{|\omega{m_{p,q}}|^2P_t}{4 r_c^2 n}\right],
\end{align}
where $r_c$ is the resistance of a coil, including coil conductive resistance and source/load resistance, $P_t$ is the overall transmit power, and $n$ is the noise power density. Here, we adopt the MI coil and channel joint model given in \cite{lin2015distributed}. As suggested by \eqref{equ:siso_capacity}, ${m_{p,q}}$ affects the capacity significantly. When ${\bf u}_q$ is perpendicular to ${\bf L}{\bf H}{\bf u}_p$, ${m_{p,q}}=0$. When ${\bf u}_q$ is parallel with ${\bf L}{\bf H}{\bf u}_p$ and ${\bf u}_p$ is horizontal with $\phi_x=0$, ${m_{p,q}}={m^{\ast}}$, which is the maximum mutual inductance that can be achieved when the positions of the transmitter and receiver are determined. Since $C(0)$ and $C(m^{\ast})$ can be drastically different, the random orientation creates significant unreliability for UWMIC. 

Note that the uncertainty of UWMIC capacity is a slow process (compared with symbol period) which is different from terrestrial fading channels. If two coil's mutual inductance is zero, it may last for several seconds and there is no connectivity.

In this paper, we define the reliability as 
\begin{definition}
	The reliability of a UWMIC system with fixed locations and random orientations is 
	\begin{align}
	{\mathcal R}=\frac{\min C(m)}{\max C(m)},~m\in[0,m^{\ast}].
	\end{align}
\end{definition}
As a result, the reliability of a SISO system with unidirectional transmit coil and receive coil is 0. 
The multiplexing gain is defined in a conventional way \cite{zheng2003diversity}.
\begin{definition}
	The multiplexing gain of a UWMIC system with fixed locations and random orientations is 
	\begin{align}
	{\mathcal M}=\lim\limits_{\frac{|\omega{m}|^2P_t}{4 r_c^2 n}\to \infty} \frac{ C(m)}{\log_2\left[\frac{|\omega{m}|^2P_t}{4 r_c^2 n}\right]}.
	\end{align} 
\end{definition}
 Thus, the multiplexing gain of SISO with unidirectional transmit coil and receive coil is 1. Next, we analyze the reliability and the multiplexing gain of different transmission and reception strategies for UWMIC.

\section{UWMIC Coil Selection Strategies}
In this section, we study different coil selection strategies as shown in Fig.~\ref{fig:configuration} to examine their reliability and multiplexing gain. For MIMO, all the three mutually perpendicular coils are utilized; there is no coil selection. For the multiple-input-single-output with coil selection (MISO-CS) and the single-input-multiple-output with coil selection (SIMO-CS), one of the transmitter and receiver uses all the three coils and the other one selects the best coil to maximize the capacity. For SISO with coil selection (SISO-CS), both the transmitter and receiver select the best coils to communicate. When the transmitter and receiver can select coils, we implicitly assume that they have the MII. The coils that are not selected can be utilized for energy harvesting, wireless sensing, among others. It can also save power without using them for communications since the active radio components are not used. We first study the single user communication and then we extend it to multiple users, i.e., one-to-many communications.  
\begin{figure}[t]
	\centering
	\includegraphics[width=0.25\textwidth]{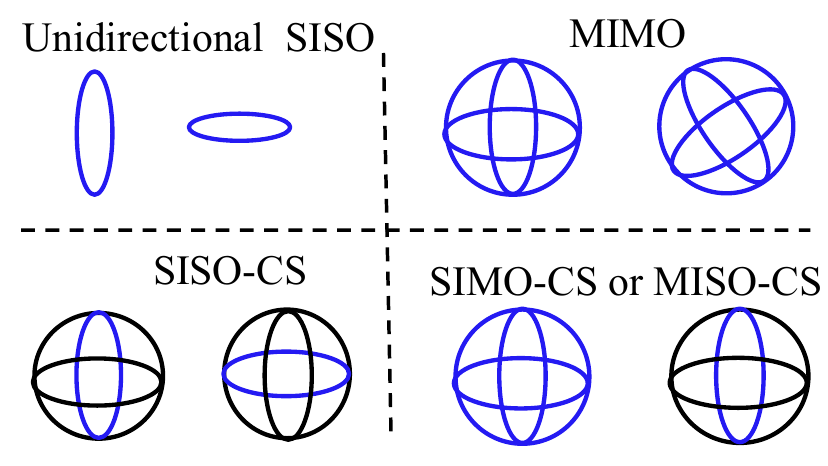}
	\vspace{-5pt}
	\caption{Coil selection configurations for UWMIC with tri-axis coils. The blue coils are selected and the black coils are not used.  }
	\vspace{-10pt}
	\label{fig:configuration}
\end{figure}
\subsection{Single-User Case}


\subsubsection{MIMO}
We assume there are one transmitter and one receiver; we denote a tri-axis coil as ${\mathcal V}_l=\{l_1,l_2,l_3\}$, which has three mutually perpendicular coils.  For each coil, let ${\bf i} = [i_x, i_y,i_z]$ and ${\bf v}= [v_x, v_y,v_z]$ denote the currents and voltages, respectively. The bandwidth of baseband signals is smaller than the joint bandwidth of the coil and wireless channel, which is equivalent to the flat fading.  By using current sources, the transmitted signal can be written as
$
{\bf i}_p= {\bf W}_p {\bf x},
$
where ${\bf W}_p$ is a 3$\times$3 matrix with $w_{pq}$ as the coefficient for the $p$th coil and the $q$th transmit symbol and ${\bf x}=[x_1,x_2,x_3]^t$ consists of the three transmit symbols.  The received signal is 
\begin{align}
{\bf v}_q=-j\omega {\bf W}_q{\bf M}_{p,q} {\bf W}_p {\bf x}+{\bf W}_q{\bf n}.
\end{align}
where $\omega$ is the angular frequency, $\bf n$ is the noise, and ${\bf M}_{p,q}$ is the mutual inductance between the transmitter and the receiver, which is 
\begin{align}
{\bf M}_{p,q} = 
\begin{bmatrix}
m_{p_1,q_1} & m_{p_1,q_2} & m_{p_1,q_3}           \\
m_{p_2,q_1} & m_{p_2,q_2} & m_{p_2,q_3}     \\
m_{p_3,q_1} & m_{p_3,q_2} & m_{p_3,q_3}  
\end{bmatrix},
\end{align} 
and ${\bf W}_q$ consists of optimal coefficients to combine the received signals. 
Also, the elements of ${\bf x}$ have the same power and it is normalized to be 1. The overall transmit power is 
$
P_t=\frac{1}{2}r_c Tr ({\bf W}_p^{\dagger}{\bf W}_p).
$
Next, we study the capacity and signal transmission strategy. 

First, we assume there is no feedback from the receiver and the transmitter equally allocate power to each of the three coils. 
When SNR is high, the multiplexing gain is fully utilized. However, when SNR is low, only the dominant channel is used. We have the following theorem
\begin{theorem}
	\label{the:mimo}
	The UWMIC channel capacity using MIMO in the high SNR regime can be approximated by 
	\begin{align}
	C_{mimo}^{high}\approx 3\log_2\left[\frac{|\omega m^{\ast}|^2P_t}{12 r_c^2n}\right]+\sum_{l=1}^{3}\log_2\lambda_l^2({\bf M}_{p,q}),
	\end{align} 
	where $\lambda_l$ is the eigenvalue of ${\bf M}_{p,q}$, and in the low SNR regime it can be approximated by
	\begin{align}
	\label{equ:mimo_low}
	C_{mimo}^{low}\approx \frac{|\omega m^{\ast}|^2P_t}{12r_c^2n}\log_2 e.
	\end{align}
\end{theorem}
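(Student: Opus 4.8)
The plan is to reduce the MIMO capacity to a sum of three parallel SISO channels and then take the two SNR limits term by term. First I would fix the transmit strategy dictated by the hypotheses: no feedback and equal power across the three coils force $\mathbf{W}_p=\sqrt{2P_t/(3r_c)}\,\mathbf{I}$, which indeed satisfies $P_t=\tfrac12 r_c\,\mathrm{Tr}(\mathbf{W}_p^\dagger\mathbf{W}_p)$. Inserting this into $\mathbf{v}_q=-j\omega\mathbf{W}_q\mathbf{M}_{p,q}\mathbf{W}_p\mathbf{x}+\mathbf{W}_q\mathbf{n}$ and taking $\mathbf{W}_q$ to be the optimal (SVD-matched) receive combiner diagonalizes the link into three non-interfering eigenchannels indexed by the singular values $\sigma_1\ge\sigma_2\ge\sigma_3$ of $\mathbf{M}_{p,q}$. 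Comparing the per-coil power $P_t/3$ against the SISO normalization in \eqref{equ:siso_capacity} gives
\begin{align}
C_{mimo}=\sum_{l=1}^{3}\log_2\left[1+\frac{|\omega|^2P_t}{12\,r_c^2 n}\,\sigma_l^2\right],
\end{align}
where the denominator $12$ carries the $4$ inherited from \eqref{equ:siso_capacity} and a factor $3$ from the power split. The one fact I would carry forward from Section~II is that the largest singular value equals the best achievable coupling, $\sigma_1=m^\ast$, since $m^\ast=\max_{\mathbf{u}_p,\mathbf{u}_q}|m_{p,q}|$ is by definition the spectral norm of $\mathbf{M}_{p,q}$.

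For the high-SNR branch I would apply $\log_2(1+x)\approx\log_2 x$ to each of the three terms, valid once even the weakest mode obeys $\tfrac{|\omega|^2P_t}{12 r_c^2 n}\sigma_3^2\gg1$. Factoring the common coefficient out of all three logarithms yields $3\log_2\!\big[\tfrac{|\omega|^2P_t}{12 r_c^2 n}\big]+\sum_l\log_2\sigma_l^2$. Writing $\sigma_l=m^\ast\lambda_l$, i.e.\ taking the $\lambda_l$ to be the singular values of $\mathbf{M}_{p,q}$ normalized so that $\lambda_1=1$, moves three copies of $\log_2|m^\ast|^2$ into the first term and produces exactly $3\log_2\!\big[\tfrac{|\omega m^\ast|^2P_t}{12 r_c^2 n}\big]+\sum_l\log_2\lambda_l^2$, confirming a multiplexing gain of three.

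For the low-SNR branch I would instead linearize with $\log_2(1+x)\approx x\log_2 e$, giving $C_{mimo}^{low}\approx\frac{|\omega|^2P_t}{12 r_c^2 n}\log_2 e\,\sum_l\sigma_l^2=\frac{|\omega|^2P_t}{12 r_c^2 n}\log_2 e\,\mathrm{Tr}(\mathbf{M}_{p,q}\mathbf{M}_{p,q}^\dagger)$. The result in \eqref{equ:mimo_low} then rests on the claim $\mathrm{Tr}(\mathbf{M}_{p,q}\mathbf{M}_{p,q}^\dagger)\approx|m^\ast|^2$, i.e.\ that the Frobenius energy of the channel concentrates in its dominant eigenmode. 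I would justify this from \eqref{equ:vertical_simp}--\eqref{equ:horizontal_simp}: the radial field generated by the horizontal coils carries the factor $k_2/k_1$ (about $10$~dB) and dominates every other entry of $\mathbf{H}$, so $\mathbf{M}_{p,q}$ is effectively rank-one with $\sigma_2,\sigma_3\ll\sigma_1=m^\ast$. Hence $\sum_l\sigma_l^2\approx\sigma_1^2=|m^\ast|^2$, which is precisely the ``only the dominant channel is used'' statement.

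The main obstacle is this trace identity. The two SNR expansions are routine and $\sigma_1=m^\ast$ is immediate, so the substantive content is that the sub-dominant singular values are negligible. This is not exact — the columns of $\mathbf{H}$ are generically independent, so $\mathbf{M}_{p,q}$ is full rank — hence the argument must be quantitative: I would bound $\sigma_2^2+\sigma_3^2=\|\mathbf{M}_{p,q}\|_F^2-\sigma_1^2$ by the magnitude of the non-radial field components relative to the radial one and show that this ratio is controlled by $k_1/k_2$ and by $1/(k_1 d)$, both small in the regime of interest. A secondary point worth stating is the consistency of the two branches: the same near-rank-one structure that yields the clean low-SNR limit forces two of the $\lambda_l$ to be small, so the high-SNR formula attains full multiplexing gain three only once the SNR is large enough to lift those weak modes above threshold.
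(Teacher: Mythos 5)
Your proposal is correct and its skeleton matches the paper's: the equal power split produces the factor $12$, the link is diagonalized into parallel channels, $\log_2(1+x)\approx\log_2 x$ gives the high-SNR branch, and linearization gives $C^{low}\propto \mathrm{tr}(\mathbf{M}_{p,q}\mathbf{M}_{p,q}^{\dagger})$. The one substantive step --- establishing $\mathrm{tr}(\mathbf{M}_{p,q}\mathbf{M}_{p,q}^{\dagger})=|m^{\ast}|^2$ --- is argued by a genuinely different lemma. The paper exploits the fact that the transmit and receive coil triads are orthonormal bases: $\sum_q|\mathbf{u}_q^t\mathbf{h}_p|^2=|\mathbf{h}_p|^2$ (Parseval at the receiver) and $\sum_p|\mathbf{L}\mathbf{H}\mathbf{u}_p|^2=\|\mathbf{L}\mathbf{H}\|_F^2$ (Parseval at the transmitter), so $\sum_{p,q}|m_{p,q}|^2$ is \emph{exactly} orientation-independent (``no orientation loss''); it then silently identifies $\|\mathbf{L}\mathbf{H}\|_F$ with $|\mathbf{h}^{\ast}|$, i.e.\ the Frobenius norm with the spectral norm. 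You instead obtain orientation independence from the invariance of singular values under the orthogonal orientation matrices, and then must close the gap $\|\mathbf{M}_{p,q}\|_F\approx\sigma_1=m^{\ast}$ via the near-rank-one structure of $\mathbf{H}$. The trade-off: the paper's Parseval identity is exact and transparently explains why MIMO reliability equals $1$, but it glosses over precisely the Frobenius-versus-spectral-norm step you single out as the ``main obstacle''; your quantitative route (for the simplified far-field $\mathbf{H}$ the $\rho$--$z$ block is exactly rank one since $(k_2/k_1)(k_1/k_2)-1=0$, and the residual $\phi$-component is suppressed by $1/(k_1 d)$) repairs the step the paper leaves implicit rather than adding an unnecessary burden. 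Your closing remark --- that the same near-rank-one structure which cleans up the low-SNR limit delays the onset of the full multiplexing gain of three at high SNR --- is correct and is a point the paper does not make.
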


\begin{proof}
	The result of $C_{mimo}^{high}$ is not surprising; the multiplexing gain is the smaller one of the transmit coil number and the receive coil number, which is three by using tri-axis coil. The proof is similar to the MIMO with terrestrial fading channels, which can be found in \cite[Chap. 2]{huang2011mimo}.  
	
	The result of $C_{mimo}^{low}$ is different from the capacity of typical fading channels. Note that, the uncertainty of UWMIC comes from the random orientation of the coils rather than multipath fading. Since magnetic induction communication does not rely on wave propagation, it barely suffers from multipath fading. 
	
	The capacity without feedback can be written as
	\begin{align}
	\label{equ:mimo_low_proof1}
	C=\log_2\left[1+\frac{\omega^2P_t}{12 r_c^2 n}tr ({\bf M}{\bf M}^{\dag})\right] {{\approx}}\frac{\omega^2P_t}{12 r_c^2 n}tr ({\bf M}{\bf M}^{\dag})\log_2 e.
	\end{align}
	The approximation is under low SNR assumption. Next, we have
	\begin{align}
	&\label{equ:mimo_low_proof2} tr ({\bf M}{\bf M}^{\dag})=\sum_{p=1}^{3}\sum_{q=1}^{3}|m_{p,q}|^2\\
	&\label{equ:mimo_low_proof3} =\sum_{p=1}^{3}\sum_{q=1}^{3}{\mu^2 \pi^2 a^4 n_c^2({\bf u}_q^t {\bf h}_p)({\bf u}_q^t {\bf h}_p)^{\dag}}\\
	&\label{equ:mimo_low_proof4} =\sum_{p=1}^{3}\mu^2 \pi^2 a^4 n_c^2\left(|{\bf h}_p{\bf u}_1|^2+|{\bf h}_p{\bf u}_2|^2+|{\bf h}_p{\bf u}_3|^2\right) \\
	&\label{equ:mimo_low_proof5}=\sum_{p=1}^{3}\mu^2 \pi^2 a^4 n_c^2 |{\bf h}_p|^2 =	\mu^2 \pi^2 a^4 n_c^2|{\bf h}^{\ast}|^2 =  |m^{\ast}|^2,
	\end{align}
	where ${\bf h}_p={\bf L}{\bf H}{\bf u}_p$ and ${\bf h}^{\ast}$ is the largest magnetic field that can be generated by the optimal orientated transmit coil with unit current, i.e., there is no orientation loss. In \eqref{equ:mimo_low_proof3} we use the definition of the mutual inductance. In \eqref{equ:mimo_low_proof4}, we consider ${\bf u}_1$, ${\bf u}_2$, and ${\bf u}_3$ form an orthogonal coordinates system and ${\bf h}_p$ is a vector in the system. By projecting ${\bf h}_p$ onto each axis and norm the coordinates, we have the length of ${\bf h}_p$. Also, we can find 
	\begin{align}
	\sum_{p=1}^{3}|{\bf h}_p|^2=|{\bf L}{\bf H}|^2\left({\bf u}_1^2+{\bf u}_2^2+{\bf u}_3^2\right) =|{\bf L}{\bf H}|^2.
	\end{align} 
	Thus, there is no orientation loss and the mutual inductance is equivalent to the optimally orientated single unidirectional coil system.
	By substituting \eqref{equ:mimo_low_proof5} into \eqref{equ:mimo_low_proof1}, we can prove \eqref{equ:mimo_low}.
\end{proof}

From Theorem \ref{the:mimo} we can see the UWMIC using MIMO can increase the diversity to three in high SNR regime, but it has the same order of capacity as the best SISO in low SNR regime, which means we can only increase reliability but cannot increase capacity using MIMO in low SNR regime. Also, by comparing the low SNR approximation of \eqref{equ:siso_capacity} and \eqref{equ:mimo_low}, we find that there is a 5 dB (3 times) difference. Without MII feedback, the MIMO capacity is 5 dB lower than the optimal SISO (with mutual inductance ${m}^{\ast}$) in the low SNR regime. 

When there is feedback of MII, we can use SVD decomposition ${\bf M}_{p,q}={\bf U}_{p,q}{\bf \Lambda}_{p,q} {\bf V}_{p,q}^{\ast}$ and let ${\bf W}_q={\bf U}_{p,q}^{\ast}$ and ${\bf W}_p={\bf V}_{p,q}$. Then, we can obtain three independent channels and use water-filling algorithm to allocate power based on ${\bf \Lambda}_{p,q}$. In this way, in the high SNR regime, we can obtain the multiplexing gain of three and in the low SNR regime the MIMO and the optimal SISO have the same performance since we can always optimally allocate power. 

In general, with or without MII the reliability of MIMO using tri-axis coil is always 1 since the capacity is independent of coil orientation and the maximum and the minimum capacity are the same. For high SNR, the multiplexing gain is three for both with and without MII. When SNR is low, the MIMO without MII lost 5 dB in capacity compared with the optimal SISO. 

Although the reliability of MIMO is 1, there are two drawbacks. First, both the transmitter and the receiver have to employ three RF chains, which is power-hungry. Also, the coils are perpendicular to each other in orientation and with high probability only one or two unidirectional coil(s) can receive a large portion of the received power and, thus, it is not efficient to turn on all the RF chains.  Second, AUVs are moving in swarms, it is desirable to communicate with the swarm members simultaneously to coordinate their formation. Using all the coils to communicate with a single AUV is not efficient. Next, we study coil selection strategies to show that we can also achieve reliability using fewer coils. 
\subsubsection{SISO-CS}
In the SISO-CS scenario, we assume the transmitter and the receiver use tri-axis coils, but they select the best unidirectional coil to send and receive signals. Both the transmitter and receiver have the MII. To maximize the channel capacity, the optimal transmit coil $p^{\ast}$ and the optimal receive coil $q^{\ast}$ are
\begin{align}
\{p^{\ast},q^{\ast}\}=\underset{p\in {\mathcal V}_1,q \in{\mathcal V}_2 }{\mathrm{argmax}} |m_{p,q}|.
\end{align}
The reliability and data rate depend on $|m_{p^{\ast},q^{\ast}}|$. Different from MIMO, the transceiver of SISO-CS only uses one coil, which cannot form a complete orthogonal coordinates system, and thus it becomes orientation-dependent, i.e., $|m_{p^{\ast},q^{\ast}}|$ is a random number. Here, we can analytically find the upper bound and lower bound of the capacity, which are given in the following theory:
\begin{theorem} 
	\label{the:siso}
  When MII is available at transmitter and receiver, the capacity of SISO-CS in the high SNR and low SNR regime are 
	\begin{align}
	\log\left[\frac{|\omega m^{\ast}|^2P_t}{36r_c^2n} \right]\leq &C_{siso}^{high}\leq \log\left[\frac{|\omega m^{\ast}|^2P_t}{4r_c^2n}\right];\\
	\frac{|\omega m^{\ast}|^2P_t}{36 r_c^2n}\log_2 e  \leq &C_{siso}^{low}\leq \frac{|\omega m^{\ast}|^2P_t}{4r_c^2n}\log_2 e.
	\end{align}
\end{theorem}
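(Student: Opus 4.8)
The plan is to reduce the whole statement to a single scalar estimate on the selected squared mutual inductance $|m_{p^{\ast},q^{\ast}}|^2$; once that quantity is trapped between two multiples of $|m^{\ast}|^2$, both the high-SNR and low-SNR bounds drop out by substitution into \eqref{equ:siso_capacity} followed by the usual SNR limits.

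First I would reuse the orientation-invariance already established for the MIMO case in Theorem~\ref{the:mimo}. Equations \eqref{equ:mimo_low_proof4}--\eqref{equ:mimo_low_proof5} show that the total coupling summed over all nine coil pairs is fixed, $\sum_{p=1}^{3}\sum_{q=1}^{3}|m_{p,q}|^2=|m^{\ast}|^2$, independent of the random coil orientation, precisely because the tri-axis coils form a complete orthogonal basis. Coil selection then simply extracts the single largest summand, $|m_{p^{\ast},q^{\ast}}|^2=\max_{p,q}|m_{p,q}|^2$, from this collection of nine nonnegative numbers whose sum is pinned to $|m^{\ast}|^2$.

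Second I would sandwich this maximum by an elementary averaging argument. The maximum of nine nonnegative terms never exceeds their sum, which gives $|m_{p^{\ast},q^{\ast}}|^2\le|m^{\ast}|^2$; and it is at least their average, which gives $|m_{p^{\ast},q^{\ast}}|^2\ge|m^{\ast}|^2/9$. Hence $|m^{\ast}|^2/9\le|m_{p^{\ast},q^{\ast}}|^2\le|m^{\ast}|^2$. I would also note that both extremes are attainable, so the bounds are tight: the upper bound holds when a single coil pair captures the entire field (one nonzero summand, i.e.\ perfect alignment), and the lower bound holds in the worst orientation where all nine pairings share the coupling equally.

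Finally I would substitute $m_{p^{\ast},q^{\ast}}$ into \eqref{equ:siso_capacity} and push through the two regimes. In the high-SNR regime $\log_2(1+x)\approx\log_2 x$, so the factor-$1/9$ lower bound converts the $4r_c^2 n$ denominator into $36 r_c^2 n$ while the upper bound keeps $4r_c^2 n$; in the low-SNR regime $\log_2(1+x)\approx x\log_2 e$, the same two substitutions produce the linear bounds with denominators $36 r_c^2 n$ and $4r_c^2 n$. The main obstacle here is conceptual rather than computational: recognizing that the theorem collapses to bounding the largest of nine nonnegative numbers whose sum is fixed at $|m^{\ast}|^2$, with the constant $9=3\times3$ arising directly from the number of tri-axis coil pairings, and then confirming tightness at both ends.
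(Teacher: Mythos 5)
Your proof is correct, but it reaches the key inequality $|m^{\ast}|^2/9\le|m_{p^{\ast},q^{\ast}}|^2\le|m^{\ast}|^2$ by a genuinely different route than the paper. The paper works in amplitude space and splits the loss between the two ends of the link: it writes the optimal transmit direction ${\bf u}_t^{\ast}$ in the transmit tri-axis basis, observes that the largest component of a unit vector in an orthonormal basis has magnitude at least $\sqrt{3}/3$ (and at most $1$), applies the same argument to the receive side, and multiplies the two worst-case projections to get the factor $1/3$ in amplitude, hence $1/9$ in power. You instead work in power space: you reuse the trace-conservation identity $\sum_{p,q}|m_{p,q}|^2=|m^{\ast}|^2$ from \eqref{equ:mimo_low_proof4}--\eqref{equ:mimo_low_proof5} in Theorem~\ref{the:mimo} and apply a pigeonhole/averaging bound to the maximum of nine nonnegative numbers with fixed sum. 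Your argument is shorter, makes the origin of the constant $9=3\times 3$ transparent, and nicely ties Theorem~\ref{the:siso} to the machinery already built for the MIMO case; the paper's argument has the advantage of localizing the loss (a factor $\sqrt{3}/3$ per side), which is what lets the authors reuse the single-sided version of the bound in the proof of Theorem~\ref{the:simo} for SIMO-CS/MISO-CS, where only one end performs selection. One caveat applies equally to both proofs: the claim that the worst case is actually attained (all nine couplings equal, or equivalently equal components on both sides) implicitly relies on the effective channel matrix ${\bf L}{\bf H}$ being rank one, the same assumption under which the paper's identity $\sum_{p,q}|m_{p,q}|^2=|m^{\ast}|^2$ holds with equality; since you inherit that identity from Theorem~\ref{the:mimo}, your derivation is consistent with the paper's framework.
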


\begin{proof}
	According to \eqref{equ:horizontal_simp}, when the positions of the transmitter and receiver are fixed there is an optimal orientation ${\bf u}_t^{\ast}$ of the transmit coil to generate the maximum magnetic field at the receiver, which is the same direction as the transmit coil to create $m^{\ast}$. Also, the magnetic field at the receiver has a direction and there is an optimal orientation ${\bf u}_r^{\ast}$ to receive the maximum amount of power. Consider the transmit tri-axis coil orientation is ${\bf U}_t=[{\bf u}_{t1},{\bf u}_{t2},{\bf u}_{t3}]$ and the receive tri-axis coil orientation is ${\bf U}_r=[{\bf u}_{r1},{\bf u}_{r2},{\bf u}_{r3}]$. ${\bf U}_t$ and ${\bf U}_r$ are orthogonal matrices with unit row and column vectors and the row/column vectors can form two orthogonal coordinates system basis. ${\bf u}_t^{\ast}$ and ${\bf u}_r^{\ast}$ can be considered as two vectors in the two orthogonal coordinates systems, respectively. Without loss of generality, we assume ${\bf u}_t^{\ast}=[1,0,0]^t$, then we have to select the coil with the largest x-direction component, i.e., the coil with the largest element in the first row of  ${\bf U}_t$. 
	
	The best case is when the largest element is 1; it cannot be larger than it because both the row vector and the column vector are unit. The worst case is the three elements in the first row are equal and the absolute value of the largest element is $\sqrt{3}/3$. It cannot be smaller than $\sqrt{3}/3$ since if such a number exists, one of the other two elements have to be larger than $\sqrt{3}/3$ to maintain the unit vector property and the coil with that element is selected. The same analysis can be applied to the receiver. 
	
	As a result, the upper bound is achieved when $|m_{p^{\ast},q^{\ast}}|=|m^{\ast}|$. The lower bound is achieved when both the transmit coil and receive coil orientations scaled by $\sqrt{3}/3$, which results in  $|m_{p^{\ast},q^{\ast}}|=|m^{\ast}|/3$. 
\end{proof}


Generally, for the SISO-CS with MII, the reliability is  1 and 1/9 in the high SNR regime and low SNR regime, respectively. The multiplexing gain is always 1. Without MII, SISO-CS cannot be performed since the transmitter cannot select the best coil.  
\subsubsection{SIMO-CS and MISO-CS}
The SIMO-CS is employed when the transmitter selects the best  transmit coil, while the receiver uses all the coils. The MISO-CS is used when the transmitter uses all the coils and the receiver selects the best coil. The capacity can be bounded by the following theory:
\begin{theorem}
	\label{the:simo}
	When MII is available at transmitter and receiver, the capacity of SIMO-CS and MISO-CS share the same upper bound and lower bound, which are
	\begin{align}
	\log_2\left[\frac{|\omega m^{\ast}|^2P_t}{12r_c^2n} \right]\leq &C_{m,s}^{high}\leq \log_2\left[\frac{|\omega m^{\ast}|^2P_t}{4r_c^2n}\right];\\
	\frac{|\omega m^{\ast}|^2P_t}{12 r_c^2n}\log_2 e  \leq &C_{m,s}^{low}\leq \frac{|\omega m^{\ast}|^2P_t}{4r_c^2n}\log_2 e.
	\end{align}
\end{theorem}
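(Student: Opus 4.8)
The plan is to reduce both SIMO-CS and MISO-CS to an effective single-stream channel with a scalar gain $|m_{\mathrm{eff}}|$, bound that gain geometrically, and then substitute into the SISO capacity expression \eqref{equ:siso_capacity}. The starting observation is channel reciprocity: the mutual inductance matrix seen by MISO-CS (beamform across the three transmit coils, select one receive coil) is the transpose of the one seen by SIMO-CS (select one transmit coil, combine across the three receive coils), and since singular values and Frobenius norms are invariant under transposition, the two configurations must share identical upper and lower bounds. Hence I would prove the bounds for, say, MISO-CS and invoke reciprocity for SIMO-CS.

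For the bounds on $|m_{\mathrm{eff}}|$, I would reuse the geometric setup from the proof of Theorem~\ref{the:siso}: fix the optimal transmit direction $\mathbf{u}_t^{\ast}$ and optimal receive direction $\mathbf{u}_r^{\ast}$ that jointly realize $m^{\ast}$. The key step is to argue that the side using all three coils incurs no orientation loss. Since the three coil axes form a complete orthonormal basis of $\mathbb{R}^3$, transmit beamforming (with MII available) can synthesize the optimal dipole orientation $\mathbf{u}_t^{\ast}$ exactly; equivalently, by the orthogonality identity already used in \eqref{equ:mimo_low_proof4}--\eqref{equ:mimo_low_proof5}, the squared couplings summed over the three mutually perpendicular transmit coils recover the full field magnitude $|\mathbf{h}^{\ast}|^2$. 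Thus the transmit-side alignment factor equals $1$.

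The remaining single-coil side is exactly the selection problem solved in Theorem~\ref{the:siso}: aligning coordinates so $\mathbf{u}_r^{\ast}=[1,0,0]^t$, the best receive coil is the row of the orthonormal orientation matrix with the largest first component, whose absolute value ranges over $[\sqrt{3}/3,\,1]$, equal to $1$ in the best case (perfect alignment) and $\sqrt{3}/3$ in the worst case (three equal components). Combining the lossless transmit factor with this receive factor gives $|m_{\mathrm{eff}}|\in[\,|m^{\ast}|/\sqrt{3},\,|m^{\ast}|\,]$, i.e.\ a power ratio $|m_{\mathrm{eff}}|^2/|m^{\ast}|^2\in[1/3,\,1]$. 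Substituting $|m_{\mathrm{eff}}|=|m^{\ast}|$ and $|m_{\mathrm{eff}}|^2=|m^{\ast}|^2/3$ into $C=\log_2[1+|\omega m_{\mathrm{eff}}|^2P_t/(4r_c^2n)]$ yields the denominators $4r_c^2n$ and $12r_c^2n$; applying $\log_2[1+x]\approx\log_2 x$ in the high-SNR regime and $\log_2[1+x]\approx x\log_2 e$ in the low-SNR regime then produces all four stated bounds.

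The main obstacle I anticipate is justifying cleanly that the three-coil side is truly lossless regardless of the triad's random orientation, and in particular that the remaining worst-case penalty is exactly the single factor $\sqrt{3}/3$ rather than a compounded loss. This hinges on the orthonormal-basis argument, that summing squared projections onto any orthonormal triad returns the full norm, which decouples the two sides and ensures only the selected side is penalized; the reciprocity observation is then what guarantees SIMO-CS and MISO-CS land on precisely the same bounds.
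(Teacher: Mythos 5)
Your proposal is correct and follows essentially the same route as the paper: the paper's proof likewise shows the all-coil side is lossless via the orthonormal-projection identity (summing $|{\bf u}_q^t{\bf h}_{p^{\ast}}|^2$ over the triad recovers $|{\bf h}_{p^{\ast}}|^2$), applies the $\sqrt{3}/3$-to-$1$ alignment factor from Theorem~\ref{the:siso} to the selected single coil, and substitutes into the SISO capacity form. The only cosmetic difference is that the paper works out SIMO-CS and asserts MISO-CS is ``similar,'' whereas you make that step explicit via transposition/reciprocity of the mutual-inductance matrix.
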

\begin{proof}
	We provide the proof for SIMO-CS and the capacity of MISO-CS can be proved in a similar way. First, the capacity for SIMO-CS is 
	\begin{align}
	\label{equ:simo_capacity}
	C=\log_2\left(1+\frac{\omega^2P_t\sum_{q=1}^{3} |m_{p^{\ast},q}|^2}{4r_c^2 n}\right)
	\end{align}
	where 
	\begin{align}
	\sum_{q=1}^{3}|m_{p^{\ast},q}|^2&=\sum_{q=1}^{3}{\mu^2 \pi^2 a^4 n_c^2({\bf u}_q^t {\bf h}_{p^{\ast}})({\bf u}_q^t {\bf h}_{p^{\ast}})^{\dag}}\\
	& =\mu^2 \pi^2 a^4 n_c^2\left(|{\bf h}_{p^{\ast}}{\bf u}_1|^2+|{\bf h}_{p^{\ast}}{\bf u}_2|^2+|{\bf h}_{p^{\ast}}{\bf u}_3|^2\right) \\
	& =\mu^2 \pi^2 a^4 n_c^2|{\bf h}_{p^{\ast}}|^2.
	\end{align}
	In preceding discussions, we have shown that the minimum $|{\bf h}_{p^{\ast}}|$ is $\sqrt{3}/3$ times of the maximum $|{\bf h}_{p^{\ast}}|$. Thus, $|{m^{\ast}}|^2/3\leq \sum_{q=1}^{3}|m_{p^{\ast},q}|^2\leq |{m^{\ast}}|^2$. By substituting the upper bound and lower bound of $\sum_{q=1}^{3}|m_{p^{\ast},q}|^2$ into \eqref{equ:simo_capacity} and using the high SNR and low SNR approximations, we can prove the theory. 
\end{proof}  

If the transmitter does not have MII, it cannot select the best coil and the SIMO-CS becomes MIMO without MII, since the transmitter equally allocates its power to the three coils. For MISO-CS without MII, both the $m^{\ast}$ and transmit power for each coil need to be scaled by $1/3$ for the capacity lower bound of MISO-CS, while the upper bound does not change compared with the scenario with MII. Also, we notice that the capacity bounds of MISO-CS without MII is the same as the SISO-CS with MII, which are given in Theory \ref{the:siso}. 

As a result, for SIMO-CS and MISO-CS with MII, the reliability is 1 and 1/3 in the high SNR regime and low SNR regime, respectively. For SIMO-CS without MII, the reliability is 1 both in the high SNR regime and low SNR regime. For MISO-CS without MII, the reliability is 1 and 1/27 in the high SNR regime and low SNR regime, respectively. The multiplexing gain is always 1; there is no power gain which is different from typical fading channels. 
\subsection {Multiuser Case}
For real-time AUVs communications, low-latency data transmission is as important as the overall throughput. Therefore, here our objective is to accommodate more users. Consider that there is a swarm of AUVs with a swarm head, which controls the formation and tasks of the other AUVs. Since each of them only has one tri-axis coil, the multiplexing gain cannot be larger than three. As discussed in preceding part, the SISO-CS, SIMO-CS, and MISO-CS solutions can also provide reliable signal. Next, we study the coil selection strategies to accommodate two and three users .

When there are four AUVs in a swarm, the swarm head communicates with the other three simultaneously. Given the mutual inductance ${\bf M}_i =[{\bf m}_{i,1}^t, {\bf m}_{i,2}^t,{\bf m}_{i,3}^t]^t$ of the $i$th receiver and the transmitter, where ${\bf m}_{i,q}$ is the mutual inductance between the $q$th receive coil and the transmit coil. 

First, each receiver selects an optimal receiving coil based on the MII, 
\begin{align}
\{q^{\ast}\}=\underset{q=1,2,3}{\arg\max} ||{\bf m}_{p,q}||.
\end{align}
Then, we construct three unit vectors ${\bf u}_p$ (p=1,2,3), which are orthogonal to ${\bf m}_{p,q^{\ast}}$ (p=1,2,3). For example, to find ${\bf u}_1$, we construct a new matrix $[{\bf m}_{2,q^{\ast}}^t,{\bf m}_{3,q^{\ast}}^t]^t={\bf U}_1{\bf \Lambda}_1{\bf V}_1^{\dag}$, where the right-hand-side is the singular value decomposition of the matrix. Here ${\bf V}_1$ is a 3$\times$3 matrix and the last column is orthogonal to ${\bf m}_{2,q^{\ast}}$ and ${\bf m}_{3,q^{\ast}}$, which is ${\bf u}_1^t$. The transmit signal is ${\bf x}=x_1{\bf u_1}+x_2{\bf u}_2+x_3{\bf u}_3$, where $x_q$ is the information sending to the $q$th receiver. 
Considering the UWMIC channel, the received signal at the $q$th receiver is
\begin{align}
v_q = -j\omega {\bf m}_{p, q^{\ast}}{\bf x}+n=-j\omega x_q{\bf m}_{p, q^{\ast}}{\bf u}_q+n. 
\end{align} 
In this way, we have three separated data channel for each receiver. The transmission power is $P_t=r_c{\bf x}^{\dag}{\bf x}/2$. The lower bound and upper bound of the per-user capacity is the same as the MISO-CS without mutual inductance information, since the power is equally allocated to each symbol and each receiver only uses one coil.   

When there are two receivers ${\mathcal V}_{1}$ and ${\mathcal V}_{2}$ in the group, without loss of generality, we assume two channels are allocated to ${\mathcal V}_{1}$ and the other one channel is allocated to ${\mathcal V}_{2}$ to fully utilize the diversity. Then, ${\mathcal V}_{1}$ selects the receive coils  $\{q_1^{\ast},q_2^{\ast}\}={argmax}||{\bf m}_{p,q}||~ {q=1,2,3}$ and ${\mathcal V}_{2}$ selects the receive coil with the largest one $||{\bf m}_{p,q}||, q=1,2,3$. The method used for three receivers is still workable. We can construct ${\bf u}_i$ using the three mutual inductance vectors and send $x_1$ and $x_2$ to ${\mathcal V}_{1}$ and $x_3$ to ${\mathcal V}_{2}$. The received signal in ${\mathcal V}_{2}$ is the same as the three receivers case, while the received signal in ${\mathcal V}_{1}$ is
\begin{align}
v_{q1} = -j\omega  \begin{bmatrix}
{\bf m}_{p, q_1^{\ast}}\\
{\bf m}_{p, q_2^{\ast}}
\end{bmatrix}
{\bf x}+n=\begin{bmatrix}
-j\omega x_1{\bf m}_{p, q_1^{\ast}}{\bf u}_1+n\\
-j\omega x_2{\bf m}_{p, q_2^{\ast}}{\bf u}_2+n
\end{bmatrix}
\end{align}
The capacity of ${\mathcal V}_{2}$ is the same as the SISO-CS with mutual inductance information, while the capacity of ${\mathcal V}_{1}$ is the same as the 2$\times$2 MIMO without MII. 

For multiuser scenario with the MII, the reliability depends on the coil selection strategy and we can always achieve 1 at high SNR. The multiplexing gain is three since we use three data streams. In this way, we can maintain the reliability and serve more users. 
%
\subsection{Mutual Inductance Estimation}
Mutual inductance is an indicator of the coupling strength between two coils. In preceding discussions, we rely on the accurate estimate of mutual inductance to optimally select coils. Next, we introduce an approach to efficiently and timely estimate the mutual inductance for AUVs.

Consider there are one transmitter with coils 1 to 3 and $c_n \in \{1,2,3\}$ receivers with coils 4 to $3(c_n+1)$. According to Kirchhoff's Voltage Law, without loss of generality, the impedance, currents, and voltages in coil $q$ can be written as
\begin{align}
\label{equ:impedance}
r_c i_q +\sum_{p=1, p\neq q}^{3(c_n+1)} j\omega m_{p,q} i_p +n= v_q.
\end{align}
We assume all the coils have the same parameters, i.e., the impedance $r_c$. There are $3(c_n+1)$ coils and we have the above equation for each one. There are $(3c_n+2)(3c_n+3)$ $m_{p,q}$; since $m_{p,q}=m_{q,p}$, only $(3c_n+2)(3c_n+3)/2$ distinct mutual inductances. Out of the $(3c_n+2)(3c_n+3)/2$ $m_{p,q}$, each transceiver has the knowledge of the mutual inductance among its own three coils, which are 0 since the coils are perpendicular in orientation. Also, the mutual inductance among receivers can be neglected due to the spatial diversity which can be guaranteed by the swarm formation. Finally, the number of mutual inductance we need to estimate reduces to $(3c_n+2)(3c_n+3)/2-3(c_n+1)-9\binom{c_n}{2}=9c_n$. 

Next, we focus on the coils of the receivers and rewrite \eqref{equ:impedance} in matrix form 
\begin{align}
\label{equ:estimation1}
{\begin{bmatrix}
	{\bf M}_{p,1}&\cdots&0\\
	\vdots&\vdots &\vdots\\
	0&\cdots&{\bf M}_{p,c_n}
	\end{bmatrix}}
\begin{bmatrix}
{\bf i}_p\\
\vdots\\
{\bf i}_p
\end{bmatrix}=\frac{1}{j\omega}
\begin{bmatrix}
-{\bf Z} {\bf i}_1\\
\vdots\\
-{\bf Z} {\bf i}_{c_n}
\end{bmatrix}
\end{align}
where ${\bf Z}$ is a diagonal matrix with $r_c$ as the diagonal elements. From the above equation, we have $9c_n$ unknown variables but only $3c_n$ linear equations, which cannot solve the problem uniquely. To address the challenge, we consider there are three time slots and in each time slot the transmitter uses different ${\bf i}_p$. In this way, the mutual inductance in \eqref{equ:estimation1} can be solved
\begin{align}
\label{equ:estimation2}
{\bf M}_{p,l}=\frac{j}{\omega}
\begingroup 
\setlength\arraycolsep{1pt}
\begin{bmatrix}
{\bf Z} {\bf i}_l(t_1)&{\bf Z} {\bf i}_l(t_2)&{\bf Z}{\bf i}_l(t_3)\\
\end{bmatrix}
\begin{bmatrix}
{\bf i}_p(t_1)&{\bf i}_p(t_2)&{\bf i}_p(t_3)\\
\end{bmatrix}^{-1}
\endgroup
\end{align}
where $l=1,2,3$ and ${\bf i}(t_l)$ is the measured current in $l$th time slot. To ensure that the inverse of the current matrix exists, the transmitter has to transmit orthogonal currents vectors, which can be constructed by the Gram–Schmidt process. All the transmitter and receivers have the knowledge of ${\bf i}_1$ at $t_1$ to $t_3$. The receiver knows its own currents at $t_1$ to $t_3$, then it can estimate ${\bf M}_{1,l}$ locally. Next, instead of sending its currents back to the transmitter, the $l$th receiver sends the estimated ${\bf M}_{1,l}$. Then, both the transmitter and receiver make their optimal decisions to transmit and receive signals.

\section{Performance Evaluation}
In this section, we numerically simulate the proposed coil selection strategies and compare the performances. First, for the signal user case, we consider the transmitter is located at the origin $[0,0,0]$ and the receiver is located at $[0,5,0.2]$(m); the coordinate system is shown in Fig.~\ref{fig:sys}. The depth of the transmitter and receiver are 0.5m and 0.3m, respectively. The relative permeability and permittivity of water are 1 and 81, respectively. The relative permeability and permittivity of air are 1 and 1, respectively. The lake/river water conductivity is 0.1S/m. The frequency is 1MHz. The coil radius, number of turns, and impedance are 0.05m, 10, and 0.5$\Omega$, respectively. The background noise density is -140 dBm/Hz \cite{shaw2006experimental}. By using the magnetic field formulas in the Appendix, we can find ${\bf H}$. Then, we randomly generate the orientation of unidirectional and tri-axis coils using the method in \cite{guo2017multiple}. 
\begin{figure}[t]
	\centering
	\includegraphics[width=0.31\textwidth]{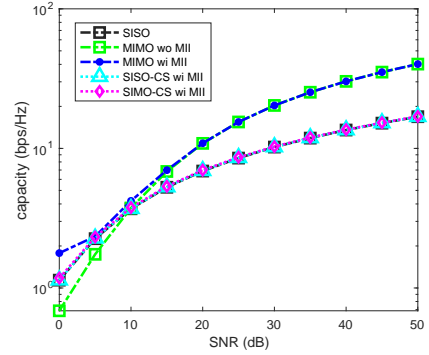}
	\vspace{-5pt}
	\caption{Simulated upper bound of capacity. }
	\vspace{-10pt}
	\label{fig:max}
\end{figure}
\begin{figure}[t]
	\centering
	\includegraphics[width=0.31\textwidth]{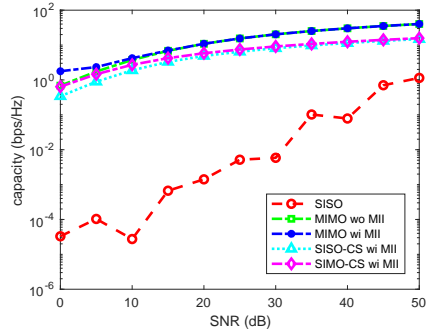}
	\vspace{-5pt}
	\caption{Simulated lower bound of capacity. }
	\vspace{-10pt}
	\label{fig:min}
\end{figure}

In Fig.~\ref{fig:max} and Fig.~\ref{fig:min}, we show the simulated upper bound and lower bound of the capacity for SISO, MIMO with MII, MIMO without MII, SISO-CS with MII, and SIMO-CS with MII. For the upper bound, we notice in the low SNR regime, all the configurations have similar capacity, which happens when the transmit coils and receive coils are well aligned. Also, the MIMO without MII is a little smaller than other configurations, because the power is divided to three coils. In the high SNR regime, the MIMO systems have higher diversity and the capacity is around three times larger than other configurations. In addition, the SIMO-CS with MII does not display any power gain, which is different from typical fading channels. For the lower bound, in low SNR regime, the capacity of MIMO with MII is about 9 times larger than the SISO-CS with MII. Also, the SIMO-CS with MII has the same performance as the MIMO without MII, which has been predicted in Theory \ref{the:mimo} and Theory \ref{the:simo}. 
\begin{figure}[t]
	\centering
	\includegraphics[width=0.31\textwidth]{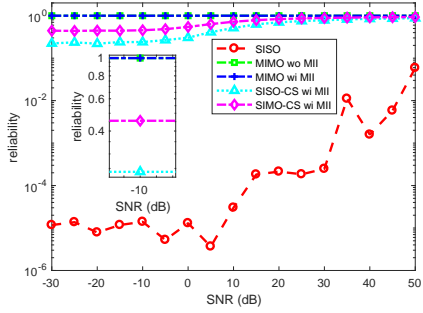}
	\vspace{-5pt}
	\caption{Effect of SNR on the reliability. }
	\vspace{-10pt}
	\label{fig:reliability}
\end{figure}

The reliability is shown in Fig.~\ref{fig:reliability}. The SISO is highly unreliable. The MIMO is always reliable no matter with or without MII. The SISO-CS and SIMO-CS have low reliability at low SNR, but in the high SNR regime, their reliability become 1. In Fig.~\ref{fig:multi}, the reliability of a user in a four AUV swarm is shown. One AUV is located at the origin with depth 0.5m, the other three are located at the x-, -x-, and y-axis with distance 5m and depth 0.3m. As we can see, the reliability is small at low SNR, but converges to 1 as SNR increases, which means in the high SNR regime, we can we can obtain multiplexing gain as well as reliability. 
  \begin{figure}[t]
  	\centering
  	\includegraphics[width=0.31\textwidth]{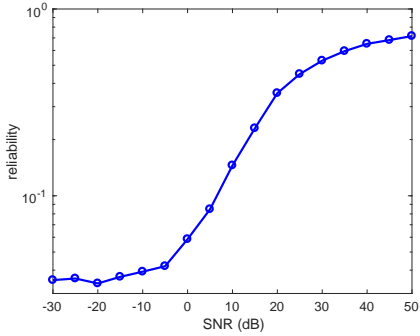}
  	\vspace{-5pt}
  	\caption{ Reliability of three receivers. There are three receivers distributed on the x-, -x-, and y-axis with the same distance from the transmitter.}
  	\vspace{-10pt}
  	\label{fig:multi}
  \end{figure}
In Fig.~\ref{fig:estimation}, the effect of MII estimation error is shown. We compare the performance with perfect MII and estimated MII. When the SNR is low, the noises have strong effects and the reliability of all the configurations are significantly reduced. As SNR increases, the estimation error becomes small, and the reliability converges to 1. 
  \begin{figure}[t]
	\centering
	\includegraphics[width=0.31\textwidth]{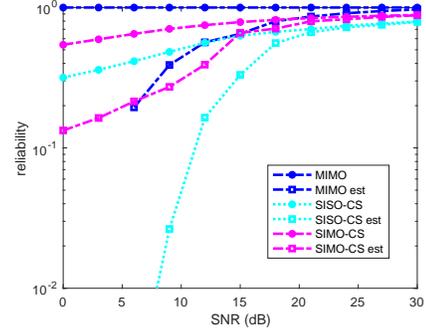}
	\vspace{-5pt}
	\caption{Effect of the estimated MII error. The est stands for estimated. Low SNR is associated with large error, while high SNR is associated with small error.}
	\vspace{-10pt}
	\label{fig:estimation}
\end{figure}
\section{Conclusion}
Underwater magnetic induction communications (UWMIC) with tri-axis coils can provide reliable connections. In this paper, we reduce the number of coils by leveraging the orientation diversity. Our results show that in the high SNR regime, we can simply use SISO-CS, i.e., one transmit coil and one receive coil, to achieve reliable communications. Also, we can increase the user number by using coil selection, which is important for real-time communications for underwater robotic networks. 
\section*{Appendix}
In Section II-A, we derive the magnetic field generated by the coils in underwater. In \cite[Chap. 2.3]{chew1995waves}, the $h_z^z$ and $h_z^{x/y}$ are given explicitly. The transverse components can be derived using $h_z^z$ and $h_z^{x/y}$ based on Maxwell equations. Then, we have the magnetic field in the underwater environment,
\allowdisplaybreaks
\begin{align}
&{h}_{\rho}^z=\zeta_1 \int_{-\infty}^{\infty}d k_{\rho}\frac{k_{\rho}^2}{k_{2z}}H_1^{(1)}(k_{\rho}\rho)[jk_{2z}\zeta_2-jk_{2z}\zeta_3],\\
&h_{z}^z = -\zeta_1 \int_{-\infty}^{\infty}d k_{\rho}\frac{k_{\rho}^3}{k_{2z}}H_0^{(1)}(k_{\rho}\rho)[\zeta_2+\zeta_3],h_{\phi}^z=0,\\
&h_{\rho}^{x/y}=j\zeta_1\cos\phi_{x/y}\int_{-\infty}^{\infty}dk_{\rho}{H_1^{(1)}}'(k_{\rho}\rho)[jk_{2z}\zeta_2-jk_{2z}\zeta_3] \\
&-\zeta_4 \cos\phi_{x/y}\int_{-\infty}^{\infty}dk_{\rho}H_1^{(1)}(k_{\rho}\rho)\zeta_5/({k_{2z}}\rho),\\
&h_{\phi}^{x/y}=-j\zeta_1\sin\phi_{x/y}\int_{-\infty}^{\infty}dk_{\rho}{H_1^{(1)}}(k_{\rho}\rho)[jk_{2z}\zeta_2-jk_{2z}\zeta_3]/\rho\\
&+\zeta_4\sin\phi_{x/y}\int_{-\infty}^{\infty}dk_{\rho}{H_1^{(1)}}'(k_{\rho}\rho)\zeta_5/k_{2z},\\
&h_{z}^{x/y}=j\zeta_1\cos\phi_{x/y}\int_{-\infty}^{\infty}dk_{\rho}k_{\rho}^2H_1^{(1)}(k_{\rho}\rho)[\zeta_2+\zeta_3],\\
&\zeta_1=\frac{ji_z a^2 n_c }{8},~~\zeta_2=e^{jk_{2z} z} ,\zeta_3=R_{21}^{TE} e^{-jk_{2z} z+2jk_{2z}d_1},\\
&\zeta_4=\frac{j\omega^2\epsilon_2\mu_2 i_{x/y}a^2 n_c}{8}, \zeta_5= \zeta_2+R_{21}^{TM}e^{-jk_{2z} z+2jk_{2z}d_1},\\
&{H_1^{(1)}}'(k_{\rho}\rho)=k_{\rho}\left[\frac{{H_1^{(1)}}(k_{\rho}\rho)}{k_{\rho}\rho}-H_2^{(1)}(k_{\rho}\rho)\right],\\
&R_{21}^{TE}=\frac{\mu_1 k_{2z}-\mu_2 k_{1z} }{\mu_1 k_{2z}+\mu_2 k_{1z}},~R_{21}^{TM}=\frac{\epsilon_1 k_{2z}-\epsilon_2 k_{1z} }{\epsilon_1 k_{2z}+\epsilon_2 k_{1z}}\\
&k_{2z}=\sqrt{k_2^2-k_{\rho}^2}, |\phi_x-\phi_y|=\pi/2, z=d_1-d_2
\end{align} 
where $H_n^{(1)}(x)$ is the Hankel function of the first kind with order $n$. 
\bibliographystyle{IEEEtran}
\bibliography{guo}
\end{document}